\documentclass[12pt,leqno,letterpaper]{article}

\usepackage{amsmath,amsthm,enumerate,amssymb}
\usepackage[utf8]{inputenc}
\usepackage[english]{babel}

\newtheorem{theorem}{Theorem}[section]
\newtheorem{proposition}[theorem]{Proposition}
\newtheorem{lemma}[theorem]{Lemma}
\newtheorem{corollary}[theorem]{Corollary}
\newtheorem{definition}[theorem]{Definition}

\newcommand{\tr}{{\rm Tr\hskip -0.2em}~}
\DeclareMathOperator{\frechetdiff}{\mathit d}
\newcommand{\fd}[1]{\hskip-0.2em\frechetdiff\hskip -0.23em{#1}}
\newcommand{\closefd}[1]{\hskip-0.2em\frechetdiff\hskip -0.35em{#1}}

\begin{document}

\title{Quantum entropy derived from first principles}
\author{Frank Hansen}
\date{}

\maketitle

\begin{abstract} The most fundamental properties of quantum entropy are derived by considering the union of two ensembles. We discuss the limits these properties put on an entropy measure and obtain that they uniquely determine the form of the entropy functional up to normalisation. In particular, the result implies that all other properties of quantum entropy may be derived from these first principles.\\[1ex]
{\bf MSC2010:} 81P15; 47A63\\
{\bf{Key words and phrases:}}  quantum entropy; first principles.
\end{abstract}

\section{Introduction}

Von Neumann suggested in 1927 the function
\[
S(\rho)=-\tr\rho\log\rho
\]
as a measure of quantum entropy, where $ \rho $ is the state of the ensemble under consideration. The form of the entropy measure was derived from a gedanken experiment in phenomenological thermodynamics. One may define a more general ``entropy measure'' by setting
\[
S_f(\rho)=-\tr f(\rho)
\]
for an arbitrary convex function $ f\colon(0,\infty)\to\mathbf R. $ It is known  \cite[Section E]{kn:wehrl:1978}  that additivity 
\[
S_f(\rho_1\otimes\rho_2)=S_f(\rho_1)+S_f(\rho_2)
\]
then implies that $ f $ is of the form $ f(t)=t\log t $ up to a positive multiplicative constant. We will discuss characterisations of entropy that are more related to physical principles.
The von Neumann entropy enjoys two basic properties:

\begin{enumerate}[(i)]

\item\label{first principle 1} The entropy of the union of two ensembles is greater than or equal to the average entropy of the component ensembles.

\item\label{first principle 2} The incremental information increases when two ensembles are united.

\end{enumerate}

The first principle is satisfied by the requirement that $ f $ is convex. Indeed,  the ``entropy measure'' defined above then has the property that the map
\begin{equation}\label{first rule}
\rho\to S_f(\rho)
\end{equation}
is concave; and this is the mathematical expression of the first principle.
The second principle is interpreted as convexity of the map
\begin{equation}\label{second rule}
\rho\to S_f(\rho_1)-S_f(\rho)
\end{equation}
in positive definite operators on a bipartite system $ \mathcal H=\mathcal H_1\otimes\mathcal H_2 $ where $ \rho_1 $ denotes the partial trace of $ \rho $ on $ \mathcal H_1. $ Lieb and Ruskai obtained that the von Neumann entropy enjoys this property \cite[Theorem 1]{kn:lieb:1973:3}, cf. also \cite{kn:lieb:1973:4} for a broader discussion and \cite{kn:hansen:2016} for a truly elementary proof.

The main result of this paper is that the von Neumann entropy is uniquely determined up to normalisation by the requirements in (\ref{first rule}) and (\ref{second rule}) representing the first principles (\ref{first principle 1}) and (\ref{first principle 2}).  This clarifies a long-standing problem in quantum physics. If there were other substantial different ways of defining quantum entropy, then it could happen that some properties derived for a specific physical system were mere mathematical artefacts of the chosen entropy function. They would be in accordance with the underlying physical principles stated above, but they would also reflect an arbitrary mathematical choice. We now know that this cannot happen.

The von Neumann entropy may increase when passing to a subsystem, cf. the remarks in (d) of \cite{kn:lieb:1973:4}. This is called the intuitive defect in quantum physics. We now realise that this defect cannot be remedied by possibly adopting an alternative definition of quantum entropy.

\section{Entropic functions}

\begin{definition}
Let $ f\colon(0,\infty)\to\mathbf R $ be a convex function. We say that
$ f $ is entropic if the function
\begin{equation}\label{definition of an entropic function}
F(\rho)=-\tr f(\rho_1)+\tr f(\rho)
\end{equation}
is convex in positive definite operators $ \rho $ on any finite dimensional bipartite system 
$ \mathcal H=\mathcal H_1\otimes\mathcal H_2 $ where $ \rho_1 $ denotes the partial trace of $ \rho $ on $ \mathcal H_1. $
\end{definition}

The function $ f(t)=t\log t $ is thus entropic.
We recall that a quantum channel is represented by a completely positive trace preserving map 
$ \Phi\colon B(\mathcal  H)\to B(\mathcal K) $ between Hilbert spaces $ \mathcal H $ and $ \mathcal K. $ 

\begin{lemma}\label{lemma: entropic and strongly entropic are the same}
Asume $ f\colon(0,\infty)\to\mathbf R $ is entropic. The entropy gain
\begin{equation}\label{second rule +}
\rho\to -\tr_{\mathcal K} f\bigl(\Phi(\rho)\bigr)+\tr_{\mathcal H} f(\rho)
\end{equation}
over a quantum channel $ \Phi $ is then convex.
\end{lemma}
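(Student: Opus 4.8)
The plan is to reduce the quantum channel to the partial-trace situation of the definition by means of the Stinespring representation. Since $\mathcal H$ and $\mathcal K$ are finite dimensional, there exist a finite dimensional Hilbert space $\mathcal E$ and an isometry $V\colon\mathcal H\to\mathcal K\otimes\mathcal E$, with $V^*V=I_{\mathcal H}$, such that $\Phi(\rho)=\tr_{\mathcal E}(V\rho V^*)$; the isometry condition encodes trace preservation and the compression encodes complete positivity. Writing $\sigma=V\rho V^*$ and regarding $\mathcal K\otimes\mathcal E$ as the bipartite system with $\mathcal H_1=\mathcal K$ and $\mathcal H_2=\mathcal E$, the partial trace of $\sigma$ on $\mathcal K$ equals $\sigma_1=\tr_{\mathcal E}\sigma=\Phi(\rho)$. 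The entropic functional attached to $\sigma$ is therefore $-\tr f(\sigma_1)+\tr f(\sigma)=-\tr f(\Phi(\rho))+\tr f(V\rho V^*)$, which is exactly the channel expression up to replacing $\tr f(V\rho V^*)$ by $\tr f(\rho)$.

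First I would use that $\rho\mapsto V\rho V^*$ is linear, so that composing the convex entropic functional on $\mathcal K\otimes\mathcal E$ with this affine map again yields a convex function of $\rho$. This gives convexity of $\rho\mapsto-\tr f(\Phi(\rho))+\tr f(V\rho V^*)$. To pass from $\tr f(V\rho V^*)$ to $\tr f(\rho)$ I would invoke that $V$ is an isometry: $V\rho V^*$ carries the same nonzero eigenvalues as $\rho$, augmented by a fixed number $N_0=\dim(\mathcal K\otimes\mathcal E)-\dim\mathcal H$ of zero eigenvalues. Hence $\tr f(V\rho V^*)$ and $\tr f(\rho)$ differ by the $\rho$-independent constant $N_0 f(0)$, and subtracting a constant leaves convexity intact.

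The main obstacle is that $\sigma=V\rho V^*$ is merely positive semidefinite, and never positive definite when $V$ is not onto, so the definition of an entropic function does not apply to it directly. I would remedy this by regularisation: apply the hypothesis to the positive definite operators $\sigma_\varepsilon=V\rho V^*+\varepsilon I_{\mathcal K\otimes\mathcal E}$ with $\varepsilon>0$. Since $\rho\mapsto\sigma_\varepsilon$ is affine with values in the positive definite cone, the function $\rho\mapsto-\tr f\bigl((\sigma_\varepsilon)_1\bigr)+\tr f(\sigma_\varepsilon)$ is convex; here $(\sigma_\varepsilon)_1=\Phi(\rho)+\varepsilon(\dim\mathcal E)I_{\mathcal K}$, while the $N_0$ eigenvalues of $\sigma_\varepsilon$ arising from the kernel of $V\rho V^*$ contribute the constant $N_0 f(\varepsilon)$, which is independent of $\rho$ and hence irrelevant for convexity. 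Letting $\varepsilon\to0$ and using that $f$ is continuous on $(0,\infty)$ together with the fact that pointwise limits of convex functions are convex then delivers the convexity of the entropy gain. The subtle point in this limit is the behaviour of $f$ at the boundary when $\Phi(\rho)$ fails to be positive definite; there one either extends $f$ continuously to $0$ or argues with values in the extended reals.
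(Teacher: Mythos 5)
Your proposal is correct and takes essentially the same route as the paper: Stinespring dilation of $\Phi$ by an isometry $W\colon\mathcal H\to\mathcal K\otimes\mathcal R$, composition of the entropic functional on the bipartite system $\mathcal K\otimes\mathcal R$ with the affine map $\rho\mapsto W\rho W^*$, and the observation that $W\rho W^*$ carries the spectrum of $\rho$ augmented by zeros. The only difference is one of care rather than of method: the paper disposes of the extra zero eigenvalues by simply assuming $f(0)=0$, whereas you subtract the constant $N_0f(0)$ explicitly and add an $\varepsilon$-regularisation to keep the argument inside the positive definite cone, thereby addressing a point the paper glosses over.
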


\begin{proof}
By Stinespring's theorem there exists a finite dimensional Hilbert space $ \mathcal R $ and a linear map $ W\colon\mathcal H\to \mathcal K\otimes \mathcal R $ such that $ W^*W=I_{\mathcal H} $ and
\[
\Phi(x)=\tr_{\mathcal R} WxW^*
\]
for every $ x\in B(\mathcal H). $ We may assume $ f(0)=0 $ and then trivially
\[
\tr_{\mathcal K\otimes\mathcal R} f(W\rho W^*)=\tr_{\mathcal H}f(\rho).
\]
It now follows by (\ref{definition of an entropic function}) that
\[
\rho\to -\tr_{\mathcal K} f\bigl(\Phi(\rho)\bigr)+\tr_{\mathcal H} f(\rho)=-\tr_{\mathcal K} f(\tr_{\mathcal R} W\rho W^*)+\tr_{\mathcal K\otimes\mathcal R} f(W\rho W^*)
\]
is convex.
\end{proof}

\begin{proposition}
Let $ f\colon(0,\infty)\to\mathbf R $ be an entropic function. The operator function of $ k $ variables
\begin{equation}\label{function of k variables associated with an entropic function}
G(\rho_1,\dots,\rho_k)=-\tr f(\rho_1+\cdots+\rho_k) + \tr f(\rho_1)+\cdots+\tr f(\rho_k)
\end{equation}
is for any natural number $ k $ convex in positive definite operators $ \rho_1,\dots,\rho_k $ on any finite dimensional Hilbert space.
\end{proposition}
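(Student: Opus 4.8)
The plan is to exhibit $G$ as the composition of the convex function $F$ from the definition of an entropic function with a linear embedding, so that joint convexity of $G$ is inherited from convexity of $F$. The device is to encode the $k$-tuple $(\rho_1,\dots,\rho_k)$ as a single block-diagonal operator on an enlarged bipartite system, chosen so that the partial trace reproduces the sum $\rho_1+\cdots+\rho_k$ while the full trace of $f$ splits into the sum of the individual terms.

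Concretely, let $\mathcal H_0$ be the finite dimensional Hilbert space carrying $\rho_1,\dots,\rho_k$, let $\mathcal H_2$ be a $k$-dimensional Hilbert space with mutually orthogonal rank-one projections $E_{11},\dots,E_{kk}$ summing to the identity, and form the bipartite system $\mathcal H=\mathcal H_1\otimes\mathcal H_2$ with $\mathcal H_1=\mathcal H_0$. I would consider the linear map
\[
(\rho_1,\dots,\rho_k)\mapsto \sigma=\sum_{i=1}^k \rho_i\otimes E_{ii},
\]
which sends a $k$-tuple of positive definite operators on $\mathcal H_0$ to a positive definite block-diagonal operator $\sigma$ on $\mathcal H$. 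Since $\sigma$ is a direct sum of the blocks $\rho_i$, the functional calculus acts blockwise, giving $\tr f(\sigma)=\tr f(\rho_1)+\cdots+\tr f(\rho_k)$; here every block is strictly positive, so no convention on $f(0)$ intervenes. Tracing out the second factor yields the reduced operator $\sigma_1=\tr_{\mathcal H_2}\sigma=\rho_1+\cdots+\rho_k$ on $\mathcal H_1$. Substituting into $F(\sigma)=-\tr f(\sigma_1)+\tr f(\sigma)$ reproduces precisely $G(\rho_1,\dots,\rho_k)$.

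To finish, observe that the assignment $(\rho_1,\dots,\rho_k)\mapsto\sigma$ is linear, hence affine, so a convex combination of two $k$-tuples maps to the corresponding convex combination of their block-diagonal operators. Since $F$ is convex on positive definite operators of the bipartite system by the entropic hypothesis, the identity $G=F\circ(\,\cdot\,)$ immediately gives joint convexity of $G$. I expect no serious obstacle here; the only point demanding care is the bookkeeping of the two identities, namely that the partial trace over $\mathcal H_2$ returns $\rho_1+\cdots+\rho_k$ and that $\tr f(\sigma)$ decomposes into $\sum_i\tr f(\rho_i)$. Both are consequences of the direct-sum structure of $\sigma$, but they should be verified cleanly, and one must keep the reduced operator $\sigma_1$ notationally distinct from the input operator $\rho_1$.
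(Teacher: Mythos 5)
Your proof is correct and is essentially the paper's own argument: the paper likewise embeds $(\rho_1,\dots,\rho_k)$ as a block-diagonal operator on $\mathcal H_1\otimes l^2(0,1,\dots,k-1)$, notes that the partial trace gives $\rho_1+\cdots+\rho_k$ while $\tr f$ acts blockwise, and restricts the convex function $F$ to this linearly embedded convex set. Your tensor-product notation $\sum_i\rho_i\otimes E_{ii}$ is just a different way of writing the paper's diagonal block matrix.
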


\begin{proof}
We may assume $ k\ge 2 $ and consider the bipartite system
\[
\mathcal H=\mathcal H_1\otimes l^2(0,1,\dots,k-1)=\mathcal H_1\oplus\mathcal H_1\oplus\cdots\oplus\mathcal H_1
\]
where the partial trace is given by
\[
\begin{pmatrix}
a_{11} & a_{12} & \dots & a_{1k}\\
a_{21} & a_{22} & \dots & a_{2k}\\
\vdots & \vdots  &          &  \vdots\\
a_{k1} & a_{k2} & \dots & a_{kk}
\end{pmatrix}_1
=a_{11}+a_{22}+\cdots+a_{kk}.
\]
We now apply the assumption of convexity of the function
\[
A\to -\tr_1 f(A_1)+\tr_{12} f(A)
\]
in the convex set of positive definite diagonal block matrices
\[
A=\begin{pmatrix}
               \rho_1 & 0          & \dots  & 0\\
               0         & \rho_2  &           & 0\\
               \vdots &              & \ddots & \\
               0         & 0           &           & \rho_n           
                                                   \end{pmatrix}
\]
on $ \mathcal H, $ and obtain that the operator function
\[
\begin{array}{l}
G(\rho_1,\dots,\rho_k)=G(A)= -\tr_1 f(A_1)+\tr_{12} f(A)\\[2ex]
=
-\tr_1 f(\rho_1+\cdots+\rho_k) + \tr_1 f(\rho_1)+\cdots+\tr_1 f(\rho_k)
\end{array}
\]
is convex.
\end{proof}

\subsection{Subentropic functions}

\begin{definition}
Let $ f\colon(0,\infty)\to\mathbf R $ be a convex function. We say that
$ f $ is subentropic of order $ k $ if the function
\begin{equation}\label{function of k variables associated with a subentropic function}
G(\rho_1,\dots,\rho_k)=-\tr f(\rho_1+\cdots+\rho_k) + \tr f(\rho_1)+\cdots+\tr f(\rho_k)
\end{equation}
is convex in positive definite operators $ \rho_1,\dots,\rho_k $ on any finite dimensional Hilbert space. We say that $ f $ is subentropic if it is subentropic of all orders.
\end{definition}

We notice that an entropic function is subentropic. 

\section{Analysis of subentropic functions}

\subsection{The Fréchet differential}

Let $ f\colon (0,\infty)\to\mathbf R $ be a continuously differentiable function. The action of the  Fréchet differential $ \closefd{}f(\rho) $ in $ h, $ where $ \rho $ is positive definite and $ h $ is self-adjoint, may be defined by setting
\[
\closefd{}f(\rho)h=\lim_{\varepsilon\to 0}\frac{f(\rho+\varepsilon h)-f(\rho)}{\varepsilon}\,.
\]
Notice that $ \rho+\varepsilon h $ eventually is positive definite. The construction therefore depends on spectral theory. We shall only work with finite dimensional Hilbert spaces  in which case the Fréchet differential may be expressed as the Hadamard product
\[
\closefd{}f(\rho)h=L_f(\rho)\circ h
\]
of $ h $ and the L{\"o}wner matrix $ L_f(\rho) $ in a basis that diagonalises $ \rho. $ This readily extends the action of the Fréchet differential to operators that are not necessarily self-adjoint.
Consider now the bivariate function
\[
k(t,s)=\frac{f(t)-f(s)}{t-s}=\int_0^1 f'(\lambda t+(1-\lambda)s)\,d\lambda\qquad t,s>0.
\]
If $L_\rho$ and $R_\rho$ denote left and right multiplication with operators $\rho$ on a Hilbert space $ \mathcal H $ of finite dimension $ n, $ then
\[
\tr h^*\closefd{}f(\rho) h=\sum_{i,j=1}^n |(he_i\mid e_j)|^2 \frac{f(\lambda_i)-f(\lambda_j)}{\lambda_i-\lambda_j}=\tr h^*\, k(L_\rho,R_\rho) h,
\]
where the intermediary calculation is carried out in an orthonormal basis $ (e_1,\dots,e_n) $ of eigenvectors of $ \rho $ with corresponding eigenvalues $ \lambda_1,\dots,\lambda_n $ counted with multiplicity. We may thus identify the Fréchet differential $ \closefd{}f(\rho) $ with $ k(L_\rho,R_\rho). $ If $ f $ has strictly positive derivative, then $ k $ is positive. The Fréchet differential is then a positive definite operator with inverse
\[
\closefd{}f(\rho)^{-1}= k(L_\rho,R_\rho)^{-1}
\]
for positive definite $ \rho. $ We note that $ \closefd{}f(\rho) $ acts as multiplication 
$
\closefd{}f(t)h=f'(t)h
$
with $ f'(t), $ when $ \rho=t $ is a multiple of the identity.

\subsection{Subentropic functions are smooth}

\begin{theorem}
A function subentropic of order two is operator convex, the derivative is operator monotone, and the second derivative is convex.
\end{theorem}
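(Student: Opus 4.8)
The plan is to pass to the second Fréchet differential. Writing the bipartite Hessian of the order‑two functional $G(\rho_1,\rho_2)=-\tr f(\rho_1+\rho_2)+\tr f(\rho_1)+\tr f(\rho_2)$ and using that the second differential of $\rho\mapsto\tr f(\rho)$ in the direction $h$ equals $\tr h\,\closefd{}f'(\rho)h$, convexity of $G$ becomes the requirement that
\[
\tr h_1\,\closefd{}f'(\rho_1)h_1+\tr h_2\,\closefd{}f'(\rho_2)h_2\ge\tr(h_1+h_2)\,\closefd{}f'(\rho_1+\rho_2)(h_1+h_2)
\]
for all self-adjoint $h_1,h_2$ and all positive definite $\rho_1,\rho_2$. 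Since the conclusions refer to $f''$, I first reduce to the smooth case: dilation $f(t)\mapsto f(\lambda t)$ preserves subentropicity (it merely rescales $\rho_1,\rho_2$), so averaging over $\lambda$ against a smooth weight supported in $(0,\infty)$ yields smooth subentropic approximants through which the conclusions pass to the limit; moreover we may assume $f''>0$ by adding a small multiple of the entropic function $t\log t$ and letting it tend to zero, which guarantees that $\closefd{}f'(\rho)$ is invertible.

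Next I would recognise the displayed inequality as a single operator inequality for the positive super-operator $\closefd{}f'(\rho)$ acting on matrices. For fixed $k=h_1+h_2$ the infimal convolution $\min_{h_1+h_2=k}\bigl(\tr h_1\,\closefd{}f'(\rho_1)h_1+\tr h_2\,\closefd{}f'(\rho_2)h_2\bigr)$ equals the quadratic form of the parallel sum $\bigl(\closefd{}f'(\rho_1)^{-1}+\closefd{}f'(\rho_2)^{-1}\bigr)^{-1}$. Hence subentropicity of order two is \emph{equivalent} to the clean superadditivity
\[
\closefd{}f'(\rho_1+\rho_2)^{-1}\ge\closefd{}f'(\rho_1)^{-1}+\closefd{}f'(\rho_2)^{-1}
\]
of the inverse second differential.

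The heart of the argument, and the step I expect to be the main obstacle, is to extract operator monotonicity of $f'$ from this superadditivity. Testing it on \emph{commuting} (simultaneously diagonal) $\rho_1,\rho_2$ only reproduces the scalar superadditivity of $1/f''$ together with the monotone decrease of $f''$; this is too weak, since a non-example such as $f(t)=1/t$ already satisfies it although $f'(t)=-t^{-2}$ is not operator monotone. The positivity that separates the two cases must therefore be read off from genuinely \emph{non-commuting} $\rho_1,\rho_2$. Here I would use Löwner's theorem in differential form: $f'$ is operator monotone precisely when $\closefd{}f'(\rho)$ maps positive semidefinite operators to positive semidefinite operators, equivalently (by the Schur product theorem) when every Löwner matrix $[(f'(\lambda_i)-f'(\lambda_j))/(\lambda_i-\lambda_j)]$ is positive semidefinite. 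The task is then to produce these cross-term inequalities from the superadditivity by choosing $\rho_1,\rho_2$ whose sum has prescribed eigenvalues $\lambda_i$ but which do not commute, and by differentiating in a second, transverse direction. This is the delicate point, and the only place where the non-commutative content of the hypothesis is genuinely used.

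Once $f'$ is known to be operator monotone on $(0,\infty)$, the remaining two assertions follow from its Löwner integral representation $f'(t)=a+bt+\int_0^\infty\bigl(\tfrac1{1+\lambda}-\tfrac1{t+\lambda}\bigr)\,d\mu(\lambda)$ with $b\ge0$ and $\mu\ge0$. Integrating exhibits $f$, up to an affine term, as a superposition of the operator convex functions $\tfrac b2 t^2$ and $-\log(t+\lambda)$, so $f$ is operator convex; differentiating twice gives $f''''(t)=\int_0^\infty 6(t+\lambda)^{-4}\,d\mu(\lambda)\ge0$, so $f''$ is convex. The same representation shows that $f$ is real-analytic, which also settles the underlying smoothness claim.
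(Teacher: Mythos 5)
Your proposal is not a proof: at the precise point where the theorem's content lies, you stop proving and start describing. The steps you do carry out are correct and sensible --- the mollification by dilation, the reduction of convexity of $G$ to the Hessian inequality, the parallel-sum identification of that inequality with superadditivity of $\closefd{}f'(\cdot)^{-1}$ (this is exactly the paper's Theorem~\ref{theorem: equivalent condition for subentropicity}, proved there via Ando's harmonic-mean characterisation), and the final passage from operator monotonicity of $f'$ to operator convexity of $f$ and convexity of $f''$ via the L\"owner representation. But the middle step --- that this superadditivity forces $f'$ to be operator monotone --- is the one you explicitly leave as ``the task'' and ``the delicate point'': you propose to hunt for non-commuting $\rho_1,\rho_2$ with prescribed spectra and to extract positivity of L\"owner matrices, without producing any construction. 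Since everything else in the outline is either routine or a reformulation of the hypothesis, that missing step \emph{is} the theorem, and as written the proposal establishes none of the three assertions.

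The irony is that the gap closes by a one-line specialisation of your own displayed inequality, not by the elaborate non-commuting configurations you anticipated. Put $h_2=0$, keeping $\rho_1,\rho_2,h_1$ arbitrary: then
\[
\tr h_1\closefd{}f'(\rho_1+\rho_2)h_1\le\tr h_1\closefd{}f'(\rho_1)h_1,
\]
so the positive superoperator $\rho\mapsto\closefd{}f'(\rho)$ is monotone decreasing in $\rho$. Since the trilinear form $(a,b,c)\mapsto\tr a\,\fd{}^2f'(\rho)(b,c)$ is the third Fr\'echet differential of $\rho\mapsto\tr f(\rho)$ and hence fully symmetric, this monotone decrease is the \emph{same} statement as $\fd{}^2f'(\rho)(h,h)\le 0$ for all self-adjoint $h$, i.e.\ operator concavity of $f'$; combined with the fact that $f'$ is nondecreasing, $f'$ is operator monotone, and your last paragraph then finishes the argument. (Your instinct that commuting $\rho_1,\rho_2$ are insufficient is right, but the needed strength comes from letting $h_1$ and $\rho_2$ fail to commute with $\rho_1$, not from a delicate choice of spectra.) The paper itself avoids all second-order machinery here: it fixes $\sigma$, notes that $\rho\mapsto-\tr f_\varepsilon(\rho+t\sigma)+\tr f_\varepsilon(\rho)$ is convex, divides by $t$ and lets $t\to0$ to conclude that $\rho\mapsto-\tr f'_\varepsilon(\rho)\sigma$ is convex for every $\sigma>0$, which is verbatim operator concavity of $f'_\varepsilon$; it then invokes nondecreasing plus operator concave implies operator monotone, the canonical integral representation, and a limit in $\varepsilon$. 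That first-order route is shorter than yours and never needs the superadditivity of inverses, which the paper develops only afterwards and for a different purpose.
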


\begin{proof}
Assume $ f\colon(0,\infty)\to\mathbf R $ is subentropic of order two. The function
\[
G(\rho,\sigma)=-\tr f(\rho+\sigma)+\tr f(\rho)+\tr f(\sigma)
\]
is then by definition convex in positive definite $ \rho $ and $ \sigma. $ By fixing $ \sigma>0 $ we obtain that the function of one variable
\begin{equation}\label{convex function of one variable}
G(\rho)=-\tr f(\rho+\sigma)+\tr f(\rho)
\end{equation}
is convex.   Let $ \varphi $ be a positive and even $ C^\infty $-function defined in the real line, vanishing outside the closed interval $ [-1,1] $ and normalised such that 
\[
\int_{-1}^1 \varphi(t)\,dt=1.
\]
For $ \varepsilon>0 $ we consider the regularisation
\[
f_\varepsilon(t)=\int_{-1}^1 \varphi(s) f(t-\varepsilon s)\,ds\qquad t>\varepsilon
\]
which is a convex function. By applying the convexity in (\ref{convex function of one variable}) we furthermore obtain that the function
\[
\begin{array}{rl}
G_\varepsilon(\rho)&=-\tr f_\epsilon(\rho+\sigma)+\tr f_\epsilon(\rho)\\[2ex]
&=\displaystyle\int_{-1}^1 \varphi(s)\tr\bigl(-f(\rho+\sigma-\varepsilon s)+f(\rho-\varepsilon s)\bigr)\,ds
\end{array}
\]
is convex in positive definite $ \rho>\varepsilon. $ By the equivalent expression
\[
f_\varepsilon(t)=\frac{1}{\varepsilon}\int_0^\infty\varphi\left(\frac{t-s}{\varepsilon}\right)f(s)\,ds
\qquad t>\varepsilon,
\]
we realise that $ f_\varepsilon $ is infinitely differentiable. By replacing $ \sigma $ with $ t\sigma $ for a real $ t>0 $ we realise that the function
\[
\rho\to\frac{-\tr f_\epsilon(\rho+t\sigma)+\tr f_\epsilon(\rho)}{t}
\]
is convex, and since $ f_\epsilon $ is continuously differentiable we obtain by letting $ t $ tend to zero that the Fréchet derivative
\[
\rho\to-\tr \closefd{f_\epsilon}(\rho)\sigma=-\tr f'_\varepsilon(\rho)\sigma 
\]
is convex in positive definite $ \rho>\varepsilon $ for all positive definite $ \sigma. $ It follows that $ f'_\varepsilon $ is operator concave in the interval $ (\varepsilon,\infty). $ Since $ f'_\varepsilon $ is non-decreasing we furthermore obtain that $ f'_\varepsilon $ is operator monotone. The function $ t\to f'_\varepsilon(t+\varepsilon) $ is thus operator monotone in the positive half-line and may therefore be written on the canonical form
\[
f'_\varepsilon(t+\varepsilon)=\alpha+\beta t+\int_0^\infty \left(\frac{\lambda}{1+\lambda^2}-\frac{1}{t+\lambda}\right)\fd{}\nu_\varepsilon(\lambda)
\qquad t>0
\]
for some non-negative measure $ \nu_\varepsilon $ with
\[
\int_\varepsilon^\infty (1+\lambda^2)^{-1} \fd{}\nu_\varepsilon(\lambda)<\infty 
\]
and constants $ \alpha,\beta $ (depending on $ \varepsilon) $ with $ \beta\ge 0, $ cf. \cite[Theorem 5.2]{kn:hansen:2013:1}. From this formula it readily follows that $ f_\varepsilon $ is operator convex. Since by convexity $ f $ is continuous, it is well-known that $ f_\varepsilon $ eventually converges uniformly towards $ f $ on any compact subset of $ (0,\infty) $ as $ \varepsilon $ tends to zero. Therefore, $ f $ is operator convex (indeed, point-wise convergence would suffice). As a consequence, $ f $ is infinitely differentiable and $ f'_\epsilon $ eventually converges towards $ f' $ as $ \varepsilon $ tends to zero. We conclude that $ f' $ is operator monotone and that $ f'' $ is convex.
\end{proof}

We proved that a subentropic function is operator convex; therefore it is also real analytic. Furthermore, its second derivative is either constantly zero or strictly positive.

\begin{theorem}\label{subentropicity of order two implies subentropicity}
A function  $ f\colon (0,\infty)\to\mathbf R $ is subentropic if and only if it is subentropic of order two.
\end{theorem}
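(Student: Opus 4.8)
The plan is to prove the nontrivial implication, namely that subentropicity of order two already forces subentropicity of all orders; the reverse implication is immediate, since being subentropic means being subentropic of all orders. I would argue by induction on the number $ k $ of variables, the case $ k=2 $ being the hypothesis (and $ k=1 $ trivial, the function in (\ref{function of k variables associated with a subentropic function}) then vanishing identically). Write $ G_k $ for that $ k $-variable function.

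The key algebraic observation is a telescoping decomposition of the $ k $-variable function into a two-variable piece and a $ (k-1) $-variable piece. One checks directly that
\[
G_k(\rho_1,\dots,\rho_k)=G_2(\rho_1,\rho_2+\cdots+\rho_k)+G_{k-1}(\rho_2,\dots,\rho_k),
\]
where the term $ \tr f(\rho_2+\cdots+\rho_k) $ produced by the first summand cancels against the corresponding term in $ G_{k-1} $, leaving exactly $ -\tr f(\rho_1+\cdots+\rho_k)+\tr f(\rho_1)+\cdots+\tr f(\rho_k) $.

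Next I would argue convexity of each summand separately. The first summand is the composition of the jointly convex function $ G_2 $, supplied by subentropicity of order two, with the linear map
\[
(\rho_1,\dots,\rho_k)\mapsto(\rho_1,\,\rho_2+\cdots+\rho_k),
\]
which carries a tuple of positive definite operators to a pair of positive definite operators; a convex function composed with a linear map is convex. The second summand is convex by the induction hypothesis, and since it does not involve $ \rho_1 $ it is still convex regarded as a function of all $ k $ variables. A sum of convex functions being convex, $ G_k $ is convex, so $ f $ is subentropic of order $ k $, which closes the induction.

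The only point that needs care is the bookkeeping of the decomposition together with the confirmation that ``convex in positive definite operators $ \rho_1,\dots,\rho_k $'' is meant in the joint sense, so that precomposition with the reindexing map is legitimate. There is no genuine analytic obstacle here; in particular, this argument is purely convexity-theoretic and does not invoke the operator convexity or smoothness established in the preceding theorem.
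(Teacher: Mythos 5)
Your proof is correct, but it takes a genuinely different and more elementary route than the paper. The paper first invokes the smoothness established in the preceding theorem (a function subentropic of order two is operator convex, hence infinitely differentiable), computes the first and second Fréchet differentials of $G$, shows that convexity of $G$ is equivalent to the quadratic-form inequality (\ref{entropy: full condition for convexity, k variables}), and then obtains that inequality from the two-variable case (\ref{entropy: condition for convexity, two variables}) ``by properly setting parentheses.'' Your argument implements the very same parenthesisation idea, but directly at the level of the functions themselves: the identity
\[
G_k(\rho_1,\dots,\rho_k)=G_2(\rho_1,\rho_2+\cdots+\rho_k)+G_{k-1}(\rho_2,\dots,\rho_k)
\]
is easily verified, the first summand is convex as the composition of the jointly convex $G_2$ with a linear map preserving positive definiteness, the second is convex by induction, and sums of convex functions are convex. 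This bypasses all differentiability considerations, so your proof needs no input from the preceding smoothness theorem and would apply verbatim to any function whatsoever, not just convex ones. What the paper's heavier route buys is the differential criterion (\ref{entropy: condition for convexity, two variables}) itself, which is not merely an intermediate step: it is reused as the starting point of the proof of Theorem~\ref{theorem: equivalent condition for subentropicity}, the harmonic-mean characterisation of subentropicity. So your argument is the cleaner proof of this particular theorem, while the paper's computation is an investment in machinery needed later.
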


\begin{proof} Suppose that $ f $ is subentropic of order two.
Since $ f $ is (even infinitely) differentiable we may apply the chain rule to the function $ G(\rho_1,\dots,\rho_k) $ defined in (\ref{function of k variables associated with a subentropic function}) and calculate the first Fréchet differential
\[
\begin{array}{l}
\fd{}G(\rho_1,\dots,\rho_k)h=\fd{}_1G(\rho_1,\dots,\rho_k)h_1+\cdots+\fd{}_kG(\rho_1,\dots,\rho_k)h_k\\[2ex]
=-\tr_1\closefd{}f(\rho_1+\cdots+\rho_k)h_1+\tr_1 \closefd{}f(\rho_1)h_1\\[1ex]
\hskip 5em +\cdots\\[1ex]
\hskip 5em -\tr_1\closefd{}f(\rho_1+\cdots+\rho_k)h_k+\tr_1 \closefd{}f(\rho_k)h_k\\[2ex]
=\tr_1\Bigl[-\closefd{}f(\rho_1+\cdots+\rho_k)(h_1+\cdots+h_k)+\closefd{}f(\rho_1)h_1+\cdots+\closefd{}f(\rho_k)h_k\Bigr]\\[2ex]
=\tr_1\Bigl[-f'(\rho_1+\cdots+\rho_k)(h_1+\cdots+h_k)+f'(\rho_1)h_1+\cdots+f'(\rho_k)h_k\Bigr]
\end{array}
\]
in vectors $ h=(h_1,\dots,h_k) $ of self-adjoint matrices.
We continue to calculate the second Fréchet differential
\[
\begin{array}{l}
d^2 G(\rho_1,\dots,\rho_k)(h,h)=\fd{}\bigl(\fd{}G(\rho_1,\dots,\rho_k)h\bigr)h\\[1ex]
=\displaystyle\sum_{i=1}^k \fd{}_i\bigl(\fd{}G(\rho_1,\dots,\rho_k)h\bigr)h_i\\[1ex]
=\displaystyle\sum_{i=1}^k \fd{}_i\bigl(\tr_1\Bigl[-f'(\rho_1+\cdots+\rho_k)(h_1+\cdots+h_k)\Bigr]  \bigr)h_i\\[-1ex]
\hskip 5em+\displaystyle\sum_{i=1}^k \fd{}_i\bigl(\tr_1\Bigl[f'(\rho_1)h_1+\cdots+f'(\rho_k)h_k\Bigr]  \bigr)h_i\\[3ex]
=-\displaystyle\sum_{i=1}^k \tr_1 (h_1+\cdots+h_k)\closefd{}f'(\rho_1+\cdots+\rho_k)h_i+\sum_{i=1}^k \tr_1 h_i\closefd{}f'(\rho_i)h_i\\[4ex]
=-\tr_1(h_1+\cdots+h_k)\closefd{}f'(\rho_1+\cdots+\rho_k)(h_1+\cdots+h_k)\\
\hskip 5em+\displaystyle\sum_{i=1}^k \tr_1 h_i\closefd{}f'(\rho_i)h_i
\end{array}
\]
and thus obtain that the function $ G $ defined in  (\ref{function of k variables associated with a subentropic function}) is convex if and only if
\begin{equation}\label{entropy: full condition for convexity, k variables}
\begin{array}{l}
\tr_1(h_1+\cdots+h_k)\closefd{}f'(\rho_1+\cdots+\rho_k)(h_1+\cdots+h_k)\\[2ex]
\le \tr_1\Bigl[h_1\closefd{}f'(\rho_1)h_1+\cdots+h_k\closefd{}f'(\rho_k)h_k\Bigr]
\end{array}
\end{equation}
for positive definite $ \rho_1,\dots,\rho_k $ and self-adjoint $ h_1,\dots,h_k. $ However, since $ G $ by assumption is convex for $ k=2 $ we derive that
\begin{equation}\label{entropy: condition for convexity, two variables}
\tr_1(h_1+h_2)\closefd{}f'(\rho_1+\rho_2)(h_1+h_2)
\le \tr_1\Bigl[h_1\closefd{}f'(\rho_1)h_1+h_2\closefd{}f'(\rho_2)h_2\Bigr]
\end{equation}
for positive definite $ \rho_1,\rho_2 $ and self-adjoint $ h_1,h_2. $ By properly setting parantheses we now derive (\ref{entropy: full condition for convexity, k variables}) by repeated application of (\ref{entropy: condition for convexity, two variables}). This shows that $ f $ is subentropic of all orders and therefore subentropic.
\end{proof}

\begin{theorem}\label{theorem: equivalent condition for subentropicity}
A twice continuously differentiable function $ f\colon(0,\infty)\to\mathbf R $ with strictly positive second derivative is subentropic if and only if
\begin{equation}\label{condition for subentropicity in terms of Frechet differentials}
\closefd{}f'(\rho+\sigma)^{-1}\ge\closefd{}f'(\rho)^{-1}+\closefd{}f'(\sigma)^{-1}
\end{equation}
for positive definite operators $ \rho $ and $ \sigma $ on any finite dimensional Hilbert space.
\end{theorem}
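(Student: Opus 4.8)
The plan is to combine Theorem~\ref{subentropicity of order two implies subentropicity} with a short variational argument that exposes a parallel-sum structure. First I would invoke Theorem~\ref{subentropicity of order two implies subentropicity} to reduce subentropicity to subentropicity of order two, and then recall from its proof that $f$ is subentropic of order two precisely when inequality~(\ref{entropy: condition for convexity, two variables}) holds for all positive definite $\rho,\sigma$ and all self-adjoint $h_1,h_2$. Since $f''>0$, the discussion of the Fréchet differential shows that each $\closefd{}f'(\rho)$ is a positive definite operator on the real inner-product space of self-adjoint matrices equipped with $\langle h,k\rangle=\tr(hk)$; in particular it is invertible, which is what gives meaning to the inverses appearing in~(\ref{condition for subentropicity in terms of Frechet differentials}).

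Writing $A=\closefd{}f'(\rho)$, $B=\closefd{}f'(\sigma)$ and $C=\closefd{}f'(\rho+\sigma)$, condition~(\ref{entropy: condition for convexity, two variables}) becomes
\[
\langle h_1+h_2,\,C(h_1+h_2)\rangle\le\langle h_1,Ah_1\rangle+\langle h_2,Bh_2\rangle
\]
for all self-adjoint $h_1,h_2$. The key step is to minimise the right-hand side over all decompositions of a fixed total $u=h_1+h_2$. Setting $h_2=u-h_1$ and differentiating the strictly convex quadratic $h_1\mapsto\langle h_1,Ah_1\rangle+\langle u-h_1,B(u-h_1)\rangle$, the minimiser obeys $Ah_1=Bh_2$; denoting this common value by $w$ one gets $u=(A^{-1}+B^{-1})w$ and hence a minimum value $\langle u,(A^{-1}+B^{-1})^{-1}u\rangle$, the quadratic form of the parallel sum of $A$ and $B$. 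Because $A$ and $B$ preserve self-adjointness, the minimiser is itself self-adjoint, so restricting the optimisation to self-adjoint arguments changes nothing.

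It then follows that the displayed inequality holds for all self-adjoint $h_1,h_2$ if and only if $\langle u,Cu\rangle\le\langle u,(A^{-1}+B^{-1})^{-1}u\rangle$ for every self-adjoint $u$, that is, if and only if $C\le(A^{-1}+B^{-1})^{-1}$ as operators. Applying the order-reversing property of inversion on positive definite operators, this is equivalent to $C^{-1}\ge A^{-1}+B^{-1}$, which is exactly~(\ref{condition for subentropicity in terms of Frechet differentials}). Since every implication in the chain is reversible, both directions of the stated equivalence are established at once.

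I expect the main obstacle to be the variational identification of the constrained minimum with the parallel sum, and in particular the bookkeeping needed to confirm that minimising over self-adjoint matrices yields the same value as minimising over all matrices, so that the positive definiteness and invertibility of the Fréchet differentials may be legitimately exploited on the real space of self-adjoint operators. Once the minimum is in hand, the final inversion step is routine.
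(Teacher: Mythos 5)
Your proof is correct, and it follows the same overall skeleton as the paper's: reduce to order two via Theorem~\ref{subentropicity of order two implies subentropicity}, characterise order-two subentropicity by the quadratic inequality (\ref{entropy: condition for convexity, two variables}), identify the sharp bound on $\closefd{}f'(\rho+\sigma)$ with the parallel sum $\bigl(\closefd{}f'(\rho)^{-1}+\closefd{}f'(\sigma)^{-1}\bigr)^{-1}$, and invert. Where you differ is in how that identification is made. The paper rewrites (\ref{entropy: condition for convexity, two variables}) as a $2\times 2$ block-operator inequality and invokes Ando's characterisation of the harmonic mean $H_2(A,B)$ as the maximal Hermitian $C$ with $\bigl(\begin{smallmatrix} C & C\\ C & C\end{smallmatrix}\bigr)\le 2\bigl(\begin{smallmatrix} A & 0\\ 0 & B\end{smallmatrix}\bigr)$, a result it cites rather than proves; this yields $\closefd{}f'(\rho+\sigma)\le\tfrac12 H_2\bigl(\closefd{}f'(\rho),\closefd{}f'(\sigma)\bigr)$, and inversion finishes. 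You instead prove the needed fact from scratch by the Anderson--Duffin variational characterisation of the parallel sum: minimising $\langle h_1,Ah_1\rangle+\langle h_2,Bh_2\rangle$ over decompositions $h_1+h_2=u$ gives the stationarity condition $Ah_1=Bh_2=w$, hence $u=(A^{-1}+B^{-1})w$ and minimum value $\langle u,(A^{-1}+B^{-1})^{-1}u\rangle$; your computation is correct, and since $A^{-1}$, $B^{-1}$ and $(A^{-1}+B^{-1})^{-1}$ all preserve self-adjointness, the minimiser lies in the real space of self-adjoint matrices, so the restriction costs nothing, exactly as you note. Your route buys self-containedness (it is, in effect, a proof of the Ando remark the paper outsources to a footnote of another reference), at the cost of having to handle the real-versus-complex bookkeeping explicitly; the paper's route is shorter on the page and places the result within the standard theory of operator means, but leans on an external citation for its key lemma. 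One small point to make airtight: the operators $\closefd{}f'(\cdot)$ are trace-symmetric and adjoint-preserving, so the operator inequality you obtain on the real space of self-adjoint matrices is equivalent to the one on all of $B(\mathcal H)$, which is the form in which the paper states (\ref{condition for subentropicity in terms of Frechet differentials}).
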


\begin{proof} Convexity of the function
\[
G(\rho,\sigma)=-\tr f(\rho+\sigma)+\tr f(\rho)+\tr f(\sigma)
\]
is by (\ref{entropy: condition for convexity, two variables}) equivalent to the inequality
\[
\tr (a+b)\closefd{}f'(\rho+\sigma)(a+b)\le \tr a\closefd{}f'(\rho)a+ \tr b\closefd{}f'(\sigma)b
\]
for positive definite $ \rho,\sigma $ and self-adjoint $ a,b. $ Since the second Fréchet differential of $ G $
is a symmetric bilinear form this is again equivalent to the inequality
\begin{equation}\label{entropy: condition for convexity}
\tr (a+b)^*\closefd{}f'(\rho+\sigma)(a+b)\le \tr a^*\closefd{}f'(\rho)a+ \tr b^*\closefd{}f'(\sigma)b
\end{equation}
for positive definite $ \rho,\sigma $ and arbitrary $ a,b. $ 
Ando \cite[Remark on page 208]{kn:ando:1979} noticed without proof that the harmonic mean $ H_2(A,B) $ of two positive definite matrices $ A $ and $ B $ may be characterised as the maximum of all Hermitian operators $ C $ such that
\[
\begin{pmatrix}
C & C\\
C & C
\end{pmatrix}
\le 2\begin{pmatrix}
        A & 0\\
        0 & B
        \end{pmatrix}.
\]
For a proof we refer the reader to \cite[Footnote 4]{kn:hansen:2006:3}.
The condition in (\ref{entropy: condition for convexity}) may equivalently be written as
\[
\begin{array}{l}
\left(\begin{pmatrix}
           a\\
           b
           \end{pmatrix}
\left|
\begin{pmatrix}
\closefd{}f'(\rho+\sigma) & \closefd{}f'(\rho+\sigma)\\
\closefd{}f'(\rho+\sigma) & \closefd{}f'(\rho+\sigma)
\end{pmatrix}
\begin{pmatrix}
           a\\
           b
           \end{pmatrix}
\right.\right)_\tr\\[4ex]
\hskip 7em\le
\left(\begin{pmatrix}
           a\\
           b
           \end{pmatrix}
\left|
\begin{pmatrix}
\closefd{}f'(\rho) & 0\\
0               & \closefd{}f'(\sigma)
\end{pmatrix}
\begin{pmatrix}
           a\\
           b
           \end{pmatrix}
\right.\right)_\tr
\end{array}
\]
for positive definite $ \rho,\sigma $ and arbitrary $ a,b. $ The condition in (\ref{entropy: condition for convexity}) is thus equivalent to the inequality
\begin{equation}\label{entropy: second condition for convexity}
\closefd{}f'(\rho+\sigma)\le H_2\Bigl(\frac{1}{2}\closefd{}f'(\rho),\frac{1}{2}\closefd{}f'(\sigma)\Bigr)=\frac{1}{2}H_2\bigl(\closefd{}f'(\rho),\closefd{}f'(\sigma)\bigr)
\end{equation}
for positive definite $ \rho $ and $ \sigma; $ where we used that the harmonic mean is positively homogeneous. Since the inverse of the harmonic mean satisfies
\[
H_2(A,B)^{-1}=\frac{A^{-1}+B^{-1}}{2}
\]
we obtain by taking the inverses of both sides of inequality (\ref{entropy: second condition for convexity}) the equivalent inequality
\[
\closefd{}f'(\rho+\sigma)^{-1}\ge\closefd{}f'(\rho)^{-1}+\closefd{}f'(\sigma)^{-1}
\]
for positive definite $ \rho,\sigma. $ We have proved that $ f $ is subentropic of order two if and only if (\ref{condition for subentropicity in terms of Frechet differentials}) holds.  The assertion now follows from Theorem~\ref{subentropicity of order two implies subentropicity}.
\end{proof}

\begin{corollary}\label{corollary: super-additive function}
Let $ f\colon(0,\infty)\to\mathbf R $ be a non-affine subsentropic function. The positive and infinitely differentiable function
\[
g(t)=\frac{1}{f''(t)}\qquad t>0
\]
is super-additive in the sense that
\begin{equation}\label{condition for super-additivity}
g(t+s)\ge g(t)+g(s)\qquad\text{for}\quad t,s>0.
\end{equation}
It is increasing and may be extended to a continuous function with $ g(0)=0, $ meaning that $ f''(t)\to\infty $ as $ t $ tends to zero. 
\end{corollary}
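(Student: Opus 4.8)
The plan is to extract super-additivity directly from the operator inequality of Theorem~\ref{theorem: equivalent condition for subentropicity} by specialising it to scalars, and then to read off monotonicity and the boundary behaviour from super-additivity together with the regularity already established. Because $ f $ is non-affine and subentropic, the discussion following the smoothness theorem guarantees that $ f $ is real analytic with second derivative either identically zero or strictly positive; the non-affine hypothesis excludes the former, so $ f''>0 $ throughout $ (0,\infty) $. Consequently $ g=1/f'' $ is positive and infinitely differentiable, and the substantive content of the statement is the inequality (\ref{condition for super-additivity}), monotonicity, and the value $ g(0)=0 $.

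First I would invoke Theorem~\ref{theorem: equivalent condition for subentropicity}, which applies precisely because $ f $ has strictly positive second derivative, to obtain
\[
\closefd{}f'(\rho+\sigma)^{-1}\ge\closefd{}f'(\rho)^{-1}+\closefd{}f'(\sigma)^{-1}
\]
for all positive definite $ \rho,\sigma $ on every finite dimensional Hilbert space. I would then take the Hilbert space to be one dimensional, so that $ \rho=s $ and $ \sigma=t $ are positive scalars. By the earlier remark that $ \closefd{}f'(\rho) $ acts as multiplication by $ f''(t) $ when $ \rho=t $ is a multiple of the identity, the inverse Fréchet differential reduces to multiplication by $ 1/f''(t)=g(t) $, and the displayed operator inequality collapses to the scalar inequality $ g(s+t)\ge g(s)+g(t) $. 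This establishes (\ref{condition for super-additivity}).

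Monotonicity is then immediate: for $ 0<s<u $ put $ t=u-s>0 $, so that $ g(u)=g(s+t)\ge g(s)+g(t)>g(s) $ since $ g $ is strictly positive; hence $ g $ is strictly increasing. For the boundary value I would observe that, being increasing and bounded below by zero, $ g $ possesses a limit $ L=\lim_{t\to 0^+}g(t)\ge 0 $. Fixing $ s>0 $ and letting $ t\to 0^+ $ in the super-additivity inequality, continuity of $ g $ at $ s $ yields $ g(s)\ge g(s)+L $, whence $ L\le 0 $ and therefore $ L=0 $. Thus $ g $ extends continuously to $ [0,\infty) $ with $ g(0)=0 $, equivalently $ f''(t)\to\infty $ as $ t $ tends to zero.

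There is no genuine obstacle once the correct specialisation is spotted: the entire force of the statement lies in reducing the operator inequality of Theorem~\ref{theorem: equivalent condition for subentropicity} to its scalar shadow, after which super-additivity, monotonicity, and the limit follow by elementary arguments. The only point deserving a moment's care is the passage to the limit at the origin, where one must use the smoothness (in particular the continuity) of $ g $ proved above to justify that $ g(s+t)\to g(s) $.
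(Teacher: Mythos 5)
Your proposal is correct and follows essentially the same route as the paper: the paper likewise deduces super-additivity by applying Theorem~\ref{theorem: equivalent condition for subentropicity} (the scalar specialisation you spell out is exactly what the paper leaves implicit), derives monotonicity from positivity plus super-additivity, and obtains $g(0)=0$ by letting one variable tend to zero in (\ref{condition for super-additivity}). Your version merely makes explicit the one-dimensional reduction and the continuity justification at the origin, which the paper compresses into single sentences.
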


\begin{proof}
Since $ f $ is non-affine the second derivative $ f'' $ is strictly positive, so $ g $ is well-defined. The super-additivity follows from Theorem~\ref{theorem: equivalent condition for subentropicity}.
Since $ g $ is positive the super-additivity implies that $ g $  is increasing. The limit
\[
g(0)=\lim_{s\to 0} g(s)=0
\]
follows by letting $ s $ tend to zero in (\ref{condition for super-additivity}). 
\end{proof}

\begin{theorem}
The convex function $ f(t)=-\log t, $ defined in the positive half-line, is subentropic; meaning that the multivariate operator function
\[
(\rho_1,\dots,\rho_k)\to\tr\log(\rho_1+\cdots+\rho_k)-\tr\bigl(\log\rho_1+\cdots+\log\rho_k\bigr)
\]
is convex in positive definite operators on a finite dimensional Hilbert space.
\end{theorem}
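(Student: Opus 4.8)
The plan is to reduce the claim to the inverse-Fréchet-differential criterion of Theorem~\ref{theorem: equivalent condition for subentropicity}. The function $ f(t)=-\log t $ is twice continuously differentiable on $ (0,\infty) $ with $ f''(t)=1/t^2>0, $ so the hypotheses of that theorem are met and it suffices to verify the operator inequality (\ref{condition for subentropicity in terms of Frechet differentials}), namely $ \closefd{}f'(\rho+\sigma)^{-1}\ge\closefd{}f'(\rho)^{-1}+\closefd{}f'(\sigma)^{-1} $ for all positive definite $ \rho,\sigma. $ Subentropicity then follows, the reduction to order two being Theorem~\ref{subentropicity of order two implies subentropicity}.

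First I would compute the Fréchet differential of $ f'. $ Here $ f'(t)=-1/t, $ and the associated divided difference is $ \dfrac{f'(t)-f'(s)}{t-s}=\dfrac{1}{ts}, $ which is positive and confirms that $ \closefd{}f'(\rho) $ is positive definite. Identifying the Fréchet differential with $ k'(L_\rho,R_\rho) $ as in Section~3.1 and using that $ L_\rho $ and $ R_\rho $ commute, I obtain $ \closefd{}f'(\rho)=(L_\rho R_\rho)^{-1}, $ so that $ \closefd{}f'(\rho)h=\rho^{-1}h\rho^{-1} $ and its inverse acts by $ \closefd{}f'(\rho)^{-1}h=\rho h\rho=L_\rho R_\rho h. $

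With this identification the inequality to be proved turns into the operator inequality $ L_{\rho+\sigma}R_{\rho+\sigma}\ge L_\rho R_\rho+L_\sigma R_\sigma $ on the space of matrices equipped with the trace inner product; equivalently,
\[
\tr h^*(\rho+\sigma)h(\rho+\sigma)\ge \tr h^*\rho h\rho+\tr h^*\sigma h\sigma
\]
for every matrix $ h. $ Expanding $ (\rho+\sigma)h(\rho+\sigma)=\rho h\rho+\rho h\sigma+\sigma h\rho+\sigma h\sigma, $ the difference between the two sides collapses to the cross terms $ \tr h^*\rho h\sigma+\tr h^*\sigma h\rho. $

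The remaining step is to see that these cross terms are non-negative. Writing $ \rho=\rho^{1/2}\rho^{1/2} $ and $ \sigma=\sigma^{1/2}\sigma^{1/2} $ and using cyclicity of the trace, I would rewrite $ \tr h^*\rho h\sigma=\tr(\rho^{1/2}h\sigma^{1/2})^*(\rho^{1/2}h\sigma^{1/2})=\|\rho^{1/2}h\sigma^{1/2}\|_2^2\ge 0, $ and likewise for the second term; in particular each term is real. This establishes (\ref{condition for subentropicity in terms of Frechet differentials}) and hence the theorem. I expect the only genuine subtlety to be the bookkeeping around the inverse Fréchet differential—getting $ \closefd{}f'(\rho)^{-1}=L_\rho R_\rho $ correct—after which the positivity of the cross terms is immediate; there is no analytic obstacle once the problem has been linearised in this way.
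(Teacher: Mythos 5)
Your proof is correct and takes essentially the same route as the paper: both reduce the claim via Theorem~\ref{theorem: equivalent condition for subentropicity} to the identification $\closefd{}f'(\rho)^{-1}h=\rho h\rho$ and the trace inequality $\tr h^*(\rho+\sigma)h(\rho+\sigma)\ge\tr h^*\rho h\rho+\tr h^*\sigma h\sigma$, which collapses to non-negativity of the cross terms. Your explicit verification $\tr h^*\rho h\sigma=\|\rho^{1/2}h\sigma^{1/2}\|_2^2\ge 0$ simply spells out what the paper dismisses as ``trivially satisfied.''
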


\begin{proof} 
The derivative $ f'(t)=-t^{-1} $ has strictly positive derivative. The Fréchet differential $ \closefd{}f'(\rho) $ is therefore a strictly positive operator on $ B(\mathcal H), $ where $ \mathcal H $ is the underlying Hilbert space. Since $ \closefd{}f'(\rho)h=\rho^{-1}h\rho^{-1} $ we obtain
\[
\closefd{}f'(\rho)^{-1}h=\rho h \rho.
\]
The function $ f(t)=-\log t $ is thus subentropic, by Theorem~\ref{theorem: equivalent condition for subentropicity}, if
\[
\tr h^*(\rho+\sigma)h(\rho+\sigma)\ge\tr h^*\rho h\rho+\tr h^*\sigma h\sigma
\]
for positive definite $ \rho,\sigma $ and arbitrary $ h. $ But this inequality reduces to
\[
\tr h^*\rho h\sigma+\tr h^*\sigma h\rho\ge 0
\]
which is trivially satisfied.
\end{proof}

\section{The main result}

Matrix entropies were introduced by Chen and Tropp  as a tool to obtain concentration inequalities for random matrices \cite{kn:tropp:2014}, and their representing functions may be characterised in various ways \cite{kn:hansen:2014:3}. Prominent examples are the functions $ f(t)=t\log t $ and $ f(t)=t^p $ for $ 1\le p\le 2. $ 

\begin{theorem}
An entropic function is a matrix entropy.
\end{theorem}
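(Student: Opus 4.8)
The plan is to read ``matrix entropy'' through the subadditivity of the associated matrix entropy functional and to exhibit that subadditivity as a special case of the convexity of the entropy gain over a quantum channel, which Lemma~\ref{lemma: entropic and strongly entropic are the same} already grants to an entropic function. Recall that for a random positive definite matrix $ Z, $ written as a function of finitely many independent variables $ X_1,\dots,X_n, $ the matrix entropy associated with $ f $ is
\[
H_f(Z)=\mathbf E\,\tr f(Z)-\tr f(\mathbf E Z),
\]
and that $ f $ is a matrix entropy in the sense of Chen and Tropp \cite{kn:tropp:2014} exactly when $ H_f $ is subadditive, that is $ H_f(Z)\le\sum_{j=1}^n \mathbf E\,H_f^{(j)}(Z), $ where $ H_f^{(j)} $ is formed with the conditional expectation $ \mathbf E_j $ that averages out the coordinate $ X_j. $ Restricting to finitely supported variables places the whole discussion inside a finite dimensional algebra $ B(\mathcal K)\otimes l^2(\Omega) $ with $ \Omega $ a finite product space, so that the characterisation of \cite{kn:hansen:2014:3} and Lemma~\ref{lemma: entropic and strongly entropic are the same} are both available without any analytic complication.

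First I would record the structural fact on which everything rests: each conditional expectation $ \mathbf E_j $ that integrates out a single coordinate is a unital, completely positive, trace preserving map of $ B(\mathcal K)\otimes l^2(\Omega) $ onto the subalgebra fixing that coordinate, hence a quantum channel in the sense of Lemma~\ref{lemma: entropic and strongly entropic are the same}. Consequently, for an entropic $ f, $ the entropy gain
\[
\rho\to -\tr f(\mathbf E_j\rho)+\tr f(\rho)
\]
is convex. This single channel statement is the exact matrix analogue of the two ensemble principle, and it is precisely what the entropic hypothesis, amplified to arbitrary channels in Lemma~\ref{lemma: entropic and strongly entropic are the same}, delivers for free.

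Next I would invoke the tensorisation argument of Chen and Tropp, of Han type, which reduces the subadditivity of $ H_f $ over $ n $ coordinates to the two coordinate case and then to the convexity of the entropy gain over the individual conditional expectations just established. The convex combinations that appear when one passes from the deterministic inequality to its expectation over the remaining coordinates are handled exactly as the $ k $-fold sum in the preceding Proposition, by realising the average as a block diagonal operator and using that convexity is stable under averaging. Assembling these pieces gives $ H_f(Z)\le\sum_{j}\mathbf E\,H_f^{(j)}(Z), $ which is the defining property of a matrix entropy.

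The hard part will be the bookkeeping of the tensorisation, together with the verification that the conditional expectations arising are genuinely trace preserving completely positive maps between the finite dimensional matrix algebras to which Lemma~\ref{lemma: entropic and strongly entropic are the same} applies; the commutative tensor factor $ l^2(\Omega) $ must be absorbed by embedding the block diagonal algebra into a full matrix algebra, or equivalently by realising $ \mathbf E_j $ as a partial trace on a suitably enlarged system in the spirit of Stinespring's theorem and of the Proposition. Once the matrix entropy condition has been recast as convexity of the entropy gain over these channels, however, it is a direct specialisation of the channel convexity already proved, and the theorem follows.
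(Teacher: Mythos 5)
Your opening identification is where the argument breaks. You read ``$f$ is a matrix entropy'' as ``the matrix $\varphi$-entropy $H_f(Z)=\mathbf E\,\tr f(Z)-\tr f(\mathbf E Z)$ is subadditive'' and assert that the two are equivalent. They are not known to be. In Chen--Tropp (Definition 2.2) and Hansen--Zhang (Definition 1.1) the matrix entropy class is \emph{defined} by concavity of the operator map $\rho\mapsto\closefd{}f'(\rho)^{-1}$ on positive definite matrices of every dimension; subadditivity of $H_f$ is the \emph{conclusion} of Chen--Tropp's main theorem, a consequence of membership in the class, and the converse implication is not established in either cited reference. Even in the scalar case the converse is a separate nontrivial theorem (of \L{}ata\l{}a--Oleszkiewicz type), and no matrix analogue is invoked anywhere in this paper. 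So even if your tensorisation bookkeeping were carried out in full, you would have proved only ``entropic $\Rightarrow H_f$ subadditive,'' which is strictly weaker than the statement. The gap is not cosmetic: the paper's final Theorem~\ref{main theorem} uses precisely the defining concavity of $\rho\mapsto\closefd{}f'(\rho)^{-1}$, restricted to scalar $\rho$, to conclude that $1/f''$ is concave, and subadditivity of $H_f$ does not hand you that.

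What you have right, and what the paper also uses, is that Lemma~\ref{lemma: entropic and strongly entropic are the same} makes the entropy gain $\rho\mapsto-\tr f(\Phi(\rho))+\tr f(\rho)$ convex for every channel $\Phi$, and that conditional expectations (suitably realised) are such channels. But the correct target is Hansen--Zhang's Theorem 2.1: $f$ is a matrix entropy \emph{if and only if} the map $(\rho,h)\mapsto\tr h^*\closefd{}f'(\rho)h$ is jointly convex. Differentiating the entropy gain twice shows that its convexity is equivalent to
\[
\tr \Phi(h)^*\closefd{}f'\bigl(\Phi(\rho)\bigr)\Phi(h)\le\tr h^*\closefd{}f'(\rho)h,
\]
and the paper applies this to a single explicit channel on $\mathcal H\oplus\mathcal H$ (Kraus operators $U,V$) which sends a block diagonal matrix $\mathrm{diag}(\rho,\sigma)$ to $\tfrac12(\rho+\sigma)\oplus\tfrac12(\rho+\sigma)$ and $\mathrm{diag}(a,b)$ to $\tfrac12(a+b)\oplus\tfrac12(a+b)$; reading off the traces gives exactly mid-point convexity of $(\rho,h)\mapsto\tr h^*\closefd{}f'(\rho)h$, and continuity finishes the proof. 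Your two-point conditional expectation is essentially this same channel, so if you redirect your argument at this characterisation instead of at subadditivity, the proof closes and coincides with the paper's; aimed at subadditivity, it cannot close.
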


\begin{proof} Let $ \mathcal H $ be a finite dimensional Hilbert space and assume that $ f $ is entropic. By Lemma~\ref{lemma: entropic and strongly entropic are the same} we know that the entropic gain
\[
F(\rho)= -\tr f(\Phi(\rho)) + \tr f(\rho)
\]
over any quantum channel $ \Phi\colon B(\mathcal  H)\to B(\mathcal K) $ is a convex function in positive definite $ \rho. $ 
Since $ f $ is twice continuously differentiable, $ F $ is twice Fréchet differentiable on finite dimensional spaces, and the first Fréchet differential is given by
\[
\begin{array}{rl}
\closefd{}F(\rho)h&=-\tr_{\mathcal K}\closefd{}f\bigl(\Phi(\rho)\bigr)\Phi(h) + \tr_{\mathcal H}\closefd{}f(\rho)h\\[1ex]
&=-\tr_{\mathcal K} f'\bigl(\Phi(\rho)\bigr)\Phi(h) + \tr_{\mathcal H} f'(\rho)h.
\end{array}
\]
The second Fréchet differential is then calculated to be
\[
\fd{}^2F(\rho)(h,h)=-\tr_{\mathcal K} \Phi(h)\closefd{}f'\bigl(\Phi(\rho)\bigr)\Phi(h) + \tr_{\mathcal H} h\closefd{}f'(\rho)h
\]
in positive definite $ \rho $ and self-adjoint $ h. $ The convexity condition for $ F $ is therefore equivalent to the inequality
\[
\tr_{\mathcal K} \Phi(h)^*\closefd{}f'\bigl(\Phi(\rho)\bigr)\Phi(h) 
\le
\tr_{\mathcal H} h^*\closefd{}f'(\rho)h
=
\tr_{\mathcal K} \Phi(h^*\closefd{}f'(\rho)h)
\]
for positive definite $ \rho $ and arbitrary $ h, $ where we again used that the second Fréchet differential is a symmetric bilinear form. Consider the block matrices
\[
U=\frac{\sqrt{2}}{2}\begin{pmatrix}
     1 & -1\\
     0 & 0
     \end{pmatrix}
\qquad\text{and}\qquad
V=\frac{\sqrt{2}}{2}\begin{pmatrix}
0 & 0\\
1 & 1
\end{pmatrix}
\]
defined on the direct sum $ \mathcal H\oplus \mathcal H $ and put
\[
\Phi(X)=UXU^*+VXV^*
\]
for $ X\in B(\mathcal H\oplus \mathcal H). $ Then $ \Phi $ is completely positive and satisfies
\[
\Phi\begin{pmatrix}
                 \rho & a\\
                 b & \sigma
                 \end{pmatrix}=
\begin{pmatrix}
\displaystyle\frac{\rho+\sigma}{2}-\frac{a+b}{2} & 0\\
0 & \displaystyle\frac{\rho+\sigma}{2}+\frac{a+b}{2} 
\end{pmatrix}.
\]
We notice that $ \Phi $ is trace preserving. In particular, for diagonal block matrices
\[
\Phi\begin{pmatrix}
                 \rho & 0\\
                 0     & \sigma
                 \end{pmatrix}
=\frac{\rho+\sigma}{2}\begin{pmatrix}
                 1 & 0\\
                 0 & 1
                 \end{pmatrix},
\]
so $ \Phi $ is also unital. Setting
\[
h=\begin{pmatrix}
               a & 0\\
               0 & b
               \end{pmatrix}
\qquad\text{and}\qquad
A=\begin{pmatrix}
               \rho & 0\\
               0 & \sigma
               \end{pmatrix}
\]
we readily obtain from the definition of the Fréchet differential that
\[
h^*\closefd{}f'(A)h=\begin{pmatrix}
                       a^*\closefd{}f'(\rho)a & 0\\
                       0                            & b^*\closefd{}f'(\sigma)b
                       \end{pmatrix}
\]
and thus
\[
\Phi\bigl(h^*\closefd{}f'(A)h\bigr)=\frac{a^*\closefd{}f'(\rho)a +b^*\closefd{}f'(\sigma)b}{2}\begin{pmatrix}
                 1 & 0\\
                 0 & 1
                 \end{pmatrix}.
\]
On the other hand
\[
\Phi(h)^*\closefd{}f'\bigl(\Phi(A)\bigr)\Phi(h)
=\displaystyle\Big[\Bigl(\frac{a+b}{2}\Bigr)^*\closefd{}f'\Bigl(\frac{\rho+\sigma}{2}\Bigr)\Bigl(\frac{a+b}{2}\Bigr)\Bigr]
\begin{pmatrix}
                 1 & 0\\
                 0 & 1
                 \end{pmatrix}.
\]
By convexity of $ F $ and by taking the trace and dividing by $ 2, $ we thus obtain
\[
\tr\Bigl(\frac{ a+b}{2}\Bigr)^*\closefd{}f'\Bigl(\frac{\rho+\sigma}{2}\Bigr)\Bigl(\frac{a+b}{2}\Bigr)
\le\frac{1}{2}\tr a^*\closefd{}f'(\rho)a +\frac{1}{2}\tr b^*\closefd{}f'(\sigma)b.
\]
The map
\[
(\rho,h)\to\tr h^*\closefd{}f'(\rho)h
\]
is thus mid-point convex and by continuity therefore convex. This implies that $ f $ is a matrix entropy \cite[Theorem 2.1]{kn:hansen:2014:3}. 
\end{proof}

\begin{theorem}\label{main theorem}
Let $ f $ be an entropic function normalised such that $ f(1)=0, $ $ f'(1)=1 $ and $ f''(1)=1. $ Then $ f(t)=t\log t $ for $ t>0. $
\end{theorem}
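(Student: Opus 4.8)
The plan is to combine the two structural facts already established for entropic functions—subentropicity and the matrix entropy property—to pin down the single-variable function $g(t)=1/f''(t)$. Since $f$ is entropic it is in particular subentropic and, by the preceding theorem, a matrix entropy; the normalisation $f''(1)=1$ guarantees that $f$ is non-affine, so $f''>0$ throughout and $g$ is a well-defined, positive, infinitely differentiable function on $(0,\infty)$. My aim is to squeeze $g$ between being super-additive and being sub-additive, forcing it to be additive and hence linear; integrating $f''=1/g$ twice and inserting the three normalisation constants then yields $f(t)=t\log t$.

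First I would record the lower bound. By Corollary~\ref{corollary: super-additive function}, $g$ is super-additive, increasing, and extends continuously with $g(0)=0$. This is the content coming from subentropicity, through the Fréchet-differential reformulation $\closefd{}f'(\rho+\sigma)^{-1}\ge\closefd{}f'(\rho)^{-1}+\closefd{}f'(\sigma)^{-1}$ of Theorem~\ref{theorem: equivalent condition for subentropicity}.

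Next I would extract the opposing inequality from the matrix entropy property. The theorem that an entropic function is a matrix entropy establishes that the map $(\rho,h)\to\tr h^*\closefd{}f'(\rho)h$ is jointly convex in positive definite $\rho$ and arbitrary $h$ over every finite dimensional Hilbert space. Restricting to the one-dimensional space, where $\rho=t$ is a positive scalar and $\closefd{}f'(t)h=f''(t)h$, this says that $(t,h)\to f''(t)h^2$ is jointly convex on $(0,\infty)\times\mathbf R$. Computing the $2\times2$ Hessian of $f''(t)h^2$ and requiring its determinant to be non-negative for all $h$ yields the pointwise inequality $f''(t)f''''(t)\ge 2f'''(t)^2$, which is precisely the statement $g''=\bigl(2f'''(t)^2-f''(t)f''''(t)\bigr)/f''(t)^3\le 0$; that is, $g$ is concave. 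A concave function with $g(0)=0$ has non-increasing difference quotient $g(t)/t$, so from $g(t)\ge t\,g(t+s)/(t+s)$ and $g(s)\ge s\,g(t+s)/(t+s)$ I obtain sub-additivity $g(t)+g(s)\ge g(t+s)$.

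Finally I would combine the two bounds. Super-additivity and sub-additivity together force $g(t+s)=g(t)+g(s)$ for all $t,s>0$; since $g$ is continuous, the only solutions of this Cauchy equation are linear, so $g(t)=ct$ with $c=g(1)>0$. Hence $f''(t)=1/(ct)$, and the normalisation $f''(1)=1$ gives $c=1$, i.e. $f''(t)=1/t$. Integrating once with $f'(1)=1$ yields $f'(t)=\log t+1$, and integrating again with $f(1)=0$ yields $f(t)=t\log t$. The main obstacle is the middle step: subentropicity supplies only the one-sided super-additive bound, and the entire argument hinges on recognising that the matrix entropy property delivers exactly the reverse estimate—concavity, hence sub-additivity of $g$—so that the two collide into additivity; the remaining reduction to scalars, the Hessian computation, and the double integration are routine.
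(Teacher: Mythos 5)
Your proposal is correct and follows essentially the same route as the paper: concavity of $g=1/f''$ extracted from the matrix entropy property, super-additivity together with $g(0)=0$ from Corollary~\ref{corollary: super-additive function}, the collision of these two forcing $g$ to be linear, and finally double integration with the normalisation constants. The only cosmetic differences are that the paper reads off concavity of $g$ directly from the defining property of a matrix entropy, namely concavity of $\rho\mapsto\closefd{}f'(\rho)^{-1}$ restricted to scalars, instead of your scalar Hessian computation from joint convexity of $(\rho,h)\mapsto\tr h^*\closefd{}f'(\rho)h$, and it closes the squeeze by a direct difference-quotient comparison (super-additivity versus concavity) rather than your passage through sub-additivity and Cauchy's functional equation.
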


\begin{proof}
By the preceding theorem it follows that $ f $ is a matrix entropy. The defining notion of a non-affine matrix entropy given by $ f $ is concavity in positive definite $ \rho $ of the map
\[
\rho\to \closefd{}f'(\rho)^{-1},
\]
cf. \cite[Definition 2.2]{kn:tropp:2014} and \cite[Definition 1.1]{kn:hansen:2014:3}. In particular, we obtain that the positive function
\[
g(t)=\frac{1}{f''(t)}\qquad t>0
\]
is concave.
Since $ f $ is also subentropic we know from Corollary~\ref{corollary: super-additive function} that $ g $ is super-additive with continuous extension to $ [0,\infty). $ Therefore,
\[
\frac{g(t+s)-g(t)}{s}\ge\frac{g(s)-g(0)}{s}\qquad t,s>0,
\]
and this inequality contradicts concavity of $ g $ for $ s<t $ unless $ g $  is affine. Since $ g(0)=0 $ there exists thus a constant $ b>0 $ such that
\[
f''(t)^{-1}=g(t)=bt\qquad t>0,
\]
and since $ f''(1)=1 $ we obtain that
\[
f''(t)=\frac{1}{t}\qquad t>0.
\]
Since $ f'(1)=1 $ we thus obtain $ f'(t)=\log t + 1, $ and since $ f(1)=0 $ finally
\[
f(t)=t\log t\qquad t>0
\]
which is the assertion.
\end{proof}

{\bf Acknowledgments.}  It is a pleasure to thank Bernhard Baumgartner and the anonymous referees for encouragement and for valuable suggestions.
The author also acknowledges support from the Japanese government Grant-in-Aid for scientific research 26400104.

{\small


}

\end{document}